\DeclareMathOperator{\Pareto}{Pareto}
\renewcommand{\Pr}{\mathbb{P}}
\newtheorem{lemma}{Lemma}
\begin{document}

\begin{center}{\large\sc On a property of the inequality curve $\lambda(p)$}
{   %
\\ 
} \vspace*{0.5cm} {Emanuele Taufer$^{{\rm{a}}}$, Flavio Santi$^{{\rm{a}}}$, Giuseppe Espa$^{{\rm{a}}}$, Maria Michela Dickson$^{{\rm{b}}}$,  }\\ \vspace*{0.5cm}
{\it 
$^{{\rm{a}}}$Department of Economics and Management, University of Trento - Italy 
}
{\it 
$^{{\rm{a}}}$Department of Statistics, University of Padua - Italy 
}
\vspace*{0.5cm} 

\end{center}
\abstract{The Zenga (1984) inequality curve $\lambda(p)$ is constant in $p$ for Type I Pareto distributions. We show that this property holds exactly only for the Pareto distribution and, asymptotically, for distributions with power tail with index $–\alpha$, with  $\alpha >1$. Exploiting these properties one can develop powerful tools to analyze and estimate the tail of a distribution. 
}

\emph{Keywords}: Tail index, inequality curve, non-parametric estimation

\baselineskip = 1.1\baselineskip

\section{Introduction}
\label{sec:1}

Let $X$ be a positive random variable with finite mean $\mu$, distribution function $F$, and probability density $f$. The inequality curve $\lambda(p)$ \cite{Zenga84} is defined as:
\begin{equation}\label{lambdap}
\lambda(p) = 1 - \frac{\log (1-Q(F^{-1}(p)))}{\log(1-p)}, \quad 0<p<1,
\end{equation}
where $F^{-1}(p)= \inf \{x\colon F(x) \geq p \}$ is the generalized inverse of $F$ and $Q(x) = \int_0^x t f(t) dt / \mu$ is the first incomplete moment. $Q$ can be defined as a function of $p$ via the Lorenz curve
\begin{equation}
L(p)=Q(F^{-1}(p)) =\frac{1}{\mu} \int_0^p F^{-1}(t) dt.
\end{equation}

The curve $\lambda(p)$ has the property of being constant for Type I Pareto distributions  and, as it will be shown, this property holds asymptotically for distributions $F$ satisfying 
\begin{equation}
\label{HT}
\bar{F} (x) = x^{-\alpha} L(x)\,,
\end{equation}
where $\bar F=1-F$, and $L(x)$ is a slowly varying function, that is $L(tx)/L(x) \to 1$ as $x \to \infty$, for any $t>0$. We will say that $\bar F$ is regularly varying (RV) at infinity with index $-\alpha$, denoted as $\bar F \in RV_{-\alpha}$. The parameter $\alpha>0$ is usually referred to as \emph{tail index}; alternatively, in the extreme value (EV) literature it is typical to refer to the EV index $\gamma>0$ with $\alpha = 1/\gamma$ (see e.g.~\cite{mcneil2005}).

These properties can be exploited in order to develop estimator an estimator of the tail index as well as a goodness of fit test for the Pareto distribution . 

Probably the most well-known estimator of the tail index is the Hill \cite{Hill1975} estimator, which exploits the $k$ upper order statistics. The Hill estimator may suffer from high bias and is heavily dependent on the choice of $k$ (see e.g.~\cite{embrechts1997}). It has been thoroughly studied and several generalization have appeared in the literature. For  recent review of estimation procedures for the tail index of a distribution see \cite{Gomes15}.

The approach to estimation proposed here, directly connected to the inequality curve $\lambda(p)$ has a nice graphical interpretation and could be used to develop graphical tools for tail analysis. Another graph-based method is to be found in \cite{Kratz96}, which exploits properties of the QQ-plot;  while a recent approach based on the asymptotic properties of the partition function, a moment statistic generally employed in the analysis of multi-fractality, has been introduced by \cite{Grahovacetal2015}; see also \cite{Jia2018} which analyzes the real part of the characteristic function at the origin.  For other related works see \cite{Leo06}, \cite{Leo13}, \cite{Mei15}.

\section{Properties of $\lambda(p)$}

For a Type I Pareto  distribution \cite[573~ff.]{johnson1995} with
\begin{equation}\label{ParetoF}
F(x)=1-(x/x_0)^{-\alpha}, \quad x \geq x_0
\end{equation}
it holds that $\lambda(p)=1/\alpha$, i.e.\ $\lambda(p)$ is constant in $p$. This is actually an if-and-only-if result, as we formalize in the following lemma:

\begin{lemma}
\label{lem:iff}
The curve $\lambda(p)$ defined in \eqref{lambdap} is constant in $p$ if, and only if, $F$ satisfies~\eqref{ParetoF}.
\end{lemma}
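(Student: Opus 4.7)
The ``if'' direction is a direct verification: for the Pareto distribution \eqref{ParetoF} one has $\bar F(x)=(x/x_0)^{-\alpha}$ and $Q(x)=1-(\alpha-1)^{-1}\bigl[(x/x_0)^{1-\alpha}-\alpha\bigr]/\ldots$ — after a one-line integration $1-Q(F^{-1}(p))=(1-p)^{1-1/\alpha}$, and plugging into \eqref{lambdap} gives $\lambda(p)=1/\alpha$. So I would dispatch this direction with one display.

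For the ``only if'' direction, my plan is to convert the hypothesis $\lambda(p)\equiv c$ into a functional equation for $\bar F$ and then into an ODE. Setting $\lambda(p)=c$ in \eqref{lambdap} yields
\begin{equation*}
1-Q\bigl(F^{-1}(p)\bigr) = (1-p)^{1-c},\qquad 0<p<1.
\end{equation*}
Writing $x=F^{-1}(p)$, so that $1-p=\bar F(x)$, this becomes the identity
\begin{equation*}
1-Q(x)=\bar F(x)^{\,1-c}
\end{equation*}
valid on the support of $F$. Since $f$ exists, $Q$ is absolutely continuous with $Q'(x)=xf(x)/\mu$, while the right-hand side is differentiable with derivative $-(1-c)\bar F(x)^{-c}f(x)$. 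Equating and cancelling $f(x)$ (on the set where it is positive) gives
\begin{equation*}
\frac{x}{\mu}=(1-c)\,\bar F(x)^{-c},
\end{equation*}
which is equivalent to $\bar F(x)=\bigl[\mu(1-c)\bigr]^{1/c}x^{-1/c}$. Setting $\alpha=1/c$ and $x_{0}=\mu(1-c)=\mu(\alpha-1)/\alpha$, this is exactly \eqref{ParetoF}, and the normalization $\bar F(x_{0})=1$ pins down the lower endpoint of the support.

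The main obstacle is bookkeeping about the domain of validity. Two points require care: first, the substitution $x=F^{-1}(p)$ uses the generalized inverse, so I have to argue that on the support of $F$ one actually has $p=F(F^{-1}(p))$ (justified a posteriori, because the ODE forces $F$ to be strictly increasing and continuous there); second, to use the ODE step I need $f(x)>0$ on the relevant range, which again follows a posteriori from the power form of $\bar F$. I would also note the admissibility constraint $0<c<1$ (equivalent to $\alpha>1$), which is necessary for $\mu<\infty$, and observe that the constant of integration is fixed by the requirement $\lim_{x\to\infty}\bar F(x)=0$ and the relation $\mu=\int_{0}^{\infty}\bar F(x)\,dx$, closing the argument.
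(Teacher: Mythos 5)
Your proposal is correct and follows essentially the same route as the paper: both directions reduce the constancy $\lambda(p)\equiv c$ to the functional equation $1-L(p)=(1-p)^{1-c}$ and recover the Pareto form by differentiating it once. The only cosmetic difference is that the paper differentiates the equivalent Lorenz-curve identity $\int_0^p F^{-1}(u)\,du=\mu[1-(1-p)^{1-c}]$ in $p$ to get the quantile function explicitly and then inverts, whereas you substitute $x=F^{-1}(p)$ first and differentiate in $x$, cancelling $f$; these are the same computation via the chain rule, and your added remarks on the domain of validity and the constraint $\alpha>1$ are sound (and more careful than the paper's own wording).
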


\begin{proof}
It is trivially verified that if $F$ satisfies~\eqref{ParetoF} then $\lambda(p)= 1/\alpha$. Suppose now that $\lambda(p) = k$, $p \in (0,1)$, where $k$ is some constant. Then it must hold that
$1-H(p) = (1-p)^k$
or equivalently, after some algebraic manipulation,
\begin{equation}
\int_0^p F^{-1}(u) du = \mu [ 1-(1-p)^k]
\end{equation}
Taking derivatives on both sides we have that
\begin{equation}
\frac{d}{dp} \int_0^p F^{-1}(u) du = \frac{d}{dp} \mu [ 1-(1-p)^k],
\end{equation}
which gets
$$
F^{-1}(p) = \mu k (1-p)^{k-1}
$$
from which, setting $x_p=F^{-1}(p)$, which implies $p=F(x_p)$, it follows that, after some further elementary manipulations,
$$
\left( \frac{x_p}{\mu k} \right)^{1/(k-1)} = 1-F(x_p).
$$
Setting $1/(k-1)=-\alpha$, properly normalized, the above $F$ follows ~\eqref{ParetoF},
\end{proof}

See \cite{Zenga84} for a detailed analysis and calculations of $\lambda(p)$ for other probability distributions. The following result can also be stated, asymptotically for the case where  $\bar F$ satisfies \eqref{HT} as it is stated in the next lemma.

For this purpose write 
\begin{equation}
\lambda(x)=1 -\frac{\log(1-Q(x))}{\log(1-F(x))}
\end{equation}

\begin{lemma}
\label{lem:power}
If $\bar F$ satisfies \eqref{HT}, then  $\lim_{x \to \infty} \lambda(x) = 1/\alpha$.
\end{lemma}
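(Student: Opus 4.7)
The plan is to derive a sharp asymptotic for $1-Q(x)$ via Karamata's theorem and then extract the limit by taking logarithms and using the defining property of slow variation.

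First I would write $1-Q(x) = \mu^{-1} \int_x^\infty t f(t)\,dt$. Since $\mu$ is finite, we must have $\alpha > 1$, which in particular forces $t\bar F(t) \to 0$ as $t\to\infty$; integrating by parts then gives
\begin{equation}
\int_x^\infty t f(t)\,dt = x\bar F(x) + \int_x^\infty \bar F(t)\,dt.
\end{equation}
Applying Karamata's theorem to the regularly varying function $\bar F \in RV_{-\alpha}$ yields $\int_x^\infty \bar F(t)\,dt \sim x\bar F(x)/(\alpha-1)$, so combining these two facts,
\begin{equation}
1-Q(x) \sim \frac{\alpha}{\mu(\alpha-1)}\, x\,\bar F(x), \qquad x\to\infty.
\end{equation}

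Next I would take logarithms. Writing $\log\bar F(x) = -\alpha\log x + \log L(x)$ and $\log(1-Q(x)) = \log x + \log \bar F(x) + O(1)$, the ratio in the definition of $\lambda(x)$ becomes
\begin{equation}
\frac{\log(1-Q(x))}{\log \bar F(x)} = 1 + \frac{\log x + O(1)}{-\alpha \log x + \log L(x)}.
\end{equation}
Since $L$ is slowly varying, $\log L(x)/\log x \to 0$, so the fraction on the right tends to $-1/\alpha$ and the whole expression tends to $1-1/\alpha$. Substituting back into the definition of $\lambda(x)$ yields $\lambda(x) \to 1/\alpha$, as desired.

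The main obstacle, if any, is just bookkeeping: both $1-Q(x)$ and $\bar F(x)$ tend to $0$, so both logarithms diverge to $-\infty$, and one must check that signs cancel correctly when computing the limit of their ratio. A secondary point is that the $O(1)$ additive error in $\log(1-Q(x))$ must be verified to be subdominant to $\log x$, which is immediate. Everything else is a direct application of Karamata's theorem plus the elementary fact that $\log L(x)=o(\log x)$ for any slowly varying $L$.
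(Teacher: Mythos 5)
Your proof is correct, and it takes a cleaner route than the paper's at the one point where care is actually needed. The paper first deduces a pointwise asymptotic for the density, $f(x)\sim \alpha L(x)x^{-(\alpha+1)}$, from $\bar F\in RV_{-\alpha}$ and then applies Karamata's theorem to $\int_x^\infty t f(t)\,dt$; strictly speaking that first step is not a consequence of Karamata's theorem but of the monotone density theorem, and it requires extra regularity of $f$ that regular variation of $\bar F$ alone does not provide. Your integration by parts, $\int_x^\infty t f(t)\,dt = x\bar F(x)+\int_x^\infty \bar F(t)\,dt$, sidesteps this entirely: Karamata is applied only to the tail integral of the regularly varying function $\bar F$ itself, which is exactly the setting where the theorem holds, and you arrive at the same conclusion $1-Q(x)\sim \tfrac{\alpha}{\mu(\alpha-1)}\,x\bar F(x)$ (with the correct constant, which the paper drops but which is irrelevant after taking logarithms). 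The final step --- taking logs and using $\log L(x)=o(\log x)$ --- is the same in both arguments. One small imprecision: finiteness of $\mu$ forces $\alpha\ge 1$, not $\alpha>1$ (e.g.\ $\bar F(x)=x^{-1}(\log x)^{-2}$ has integrable tail); since both the vanishing of $x\bar F(x)$ and the Karamata asymptotic $\int_x^\infty\bar F(t)\,dt\sim x\bar F(x)/(\alpha-1)$ genuinely need $\alpha>1$, you should invoke the paper's standing assumption $\alpha>1$ rather than derive it from $\mu<\infty$. Under that assumption your argument is complete.
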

\begin{proof}
Assume \eqref{HT}, since $\bar{F}(x)=\int_x^\infty f(t) dt$; by Karamata's theorem it follows that the density $f(x) = L(x) x^{-(\alpha+1)}$ as $x \to \infty$; again , by Karamata's theorem:
$$
\mu(1-Q(x))=\int_x^\infty L(t) t^{1-(\alpha+1)} dt =L(x)  x^{-\alpha+1}, \quad x \to \infty.
$$
Then, as $x \to \infty$, 
\begin{equation}
\begin{split}
\lambda(x) &= 1- \frac{\log(\mu^{-1} L(x) x^{1-\alpha})}{\log(L(x) x^{-\alpha})}\\
           & = 1 - \frac{\log(L(x) x^{-\alpha})}{\log(L(x) x^{-\alpha})}+ \frac{1}{\alpha}\frac{\log(L(x) x)}{\log(L(x) x)} + \frac{1}{\alpha}\frac{\log(\mu)}{\log(L(x) x)} \\
					& = \frac{1}{\alpha} + O\left(\frac{1}{\log L(x)x}\right).
\end{split}
\end{equation}
\end{proof}
A tail property of Pareto type I distribution is worth of being noted. Let $X$ be a random variable distributed according to~\eqref{ParetoF} -- that is, $X\sim\Pareto(\alpha,x_0)$ --, the following property holds for any $x_1>x_2>x_0$:
\[
\Pr[X>x_1|X>x_2]=\left(\frac{x_1}{x_2}\right)^{-\alpha}\,,
\]
hence, the truncated random variable $(X|X>x_2)$ is distributed as $\Pareto(\alpha,x_2)$.

The implications of this property are twofold. Firstly, the truncated random variable is still distributed according to~\eqref{ParetoF}, thus Lemma~\ref{lem:iff} still applies. Secondly, the tail index $\alpha$ is the same both for original and for truncated random variable, thus function $\lambda(p)$ can be used for the estimation of $\alpha$ regardless of the truncation threshold~$x_2$.

The same property we have just outlined holds asymptotically for distribution functions satisfying \eqref{HT}.

Figure \ref{fig:f1} reports the empirical curve $\hat\lambda(p)$ as a function of $p$ for a Pareto distribution defined by~\eqref{ParetoF} with $\alpha=2$ and $x_0=1$, denoted with $\Pareto(2,1)$ and a Fr\'{e}chet distribution with $F(x)=\exp{(-x^{-\alpha})}$ for $x \geq 0$ and $\alpha=2$, denoted by Fr\'{e}chet(2) at different truncation thresholds. Note the remarkably regular behavior or the curves and the closeness to the theoretical form for the Fr\'{e}chet case already for low levels of truncation. 

\textit{{
\begin{figure}[!h]
	\centering
	\subfigure{
		\label{fig.1a}
		\includegraphics[scale=0.31]{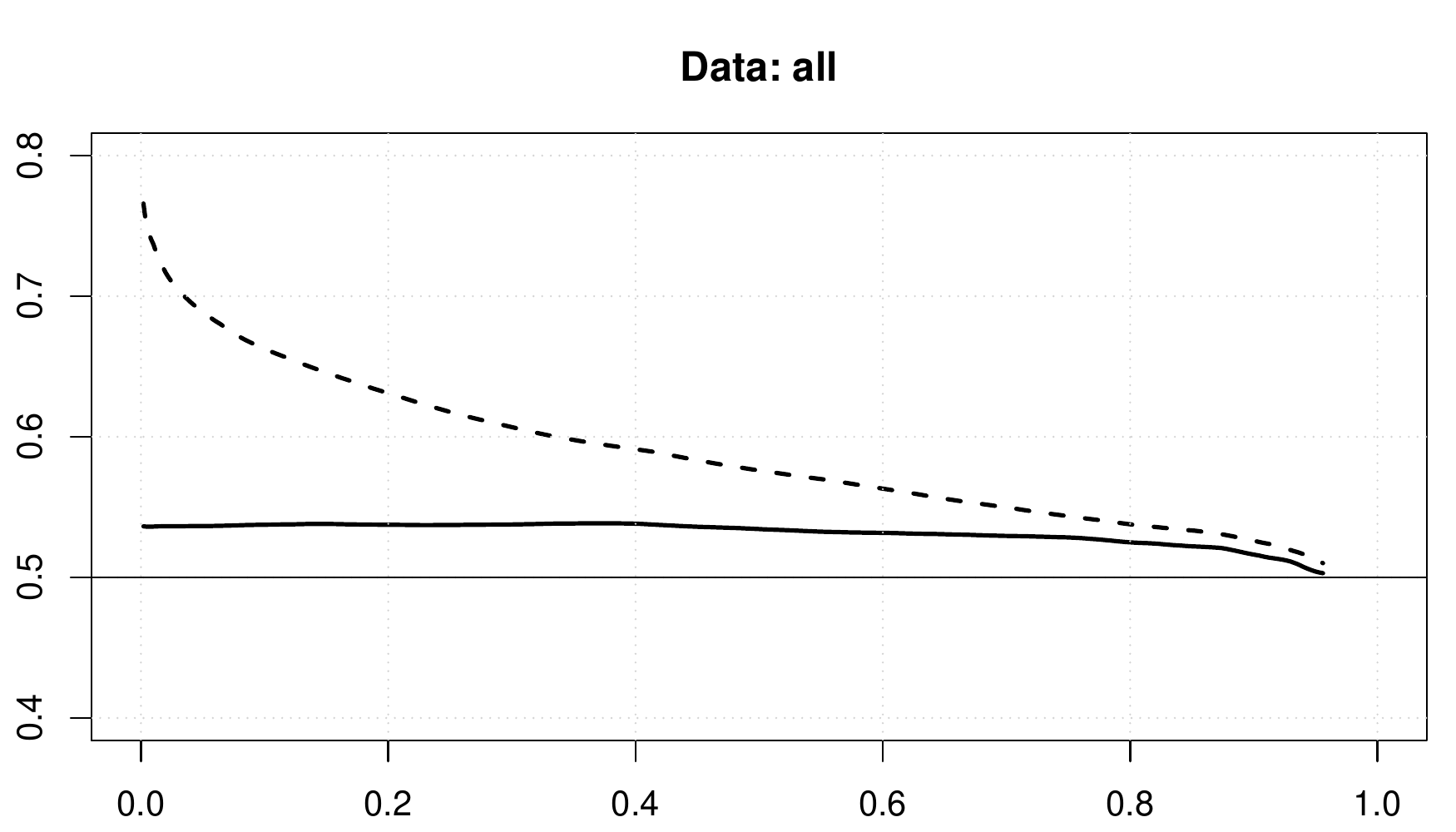}
	}
	\subfigure{
		\label{fig.1b}
		\includegraphics[scale=0.31]{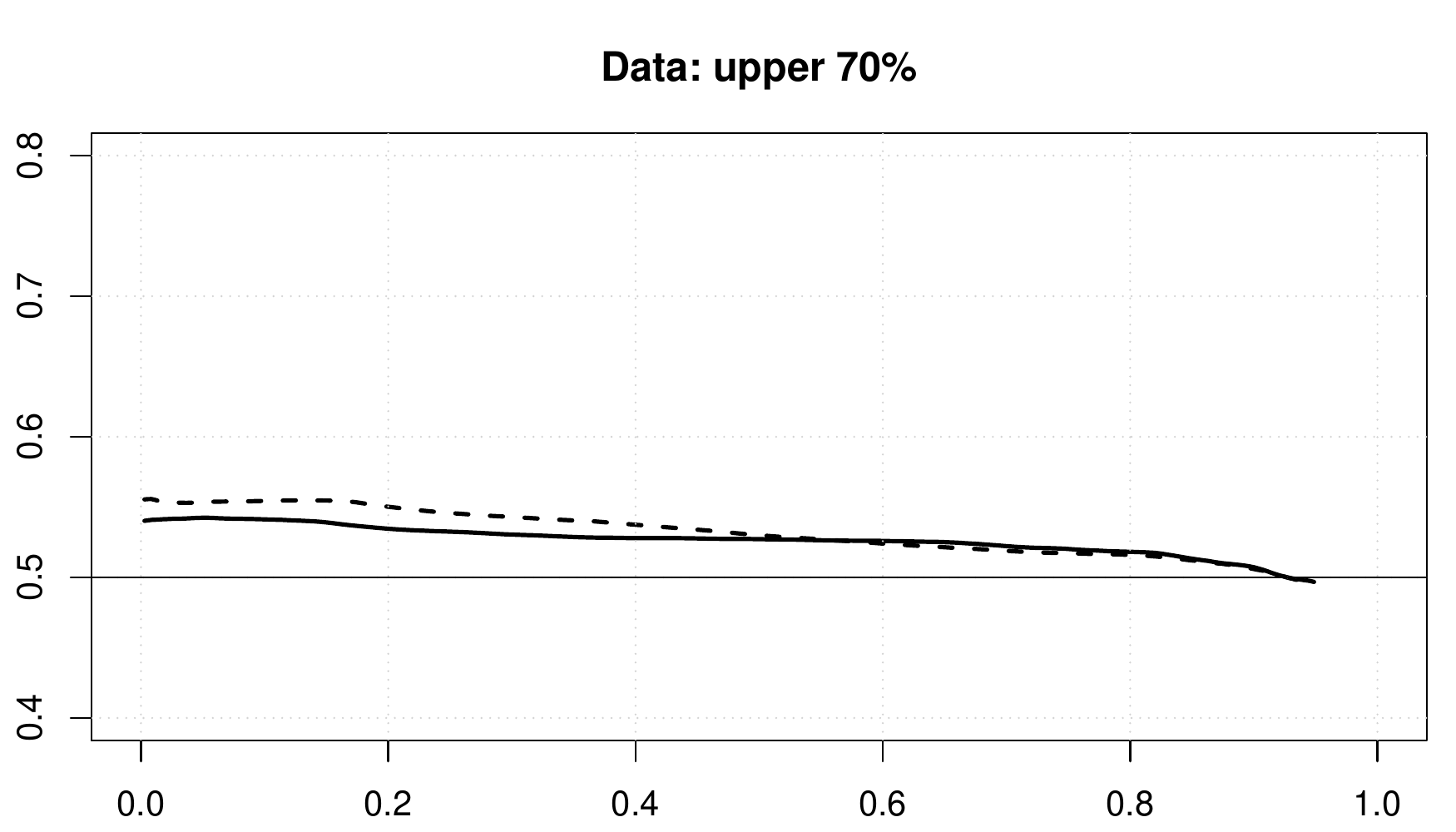}
	}
	\subfigure{
		\label{fig.1c}
		\includegraphics[scale=0.31]{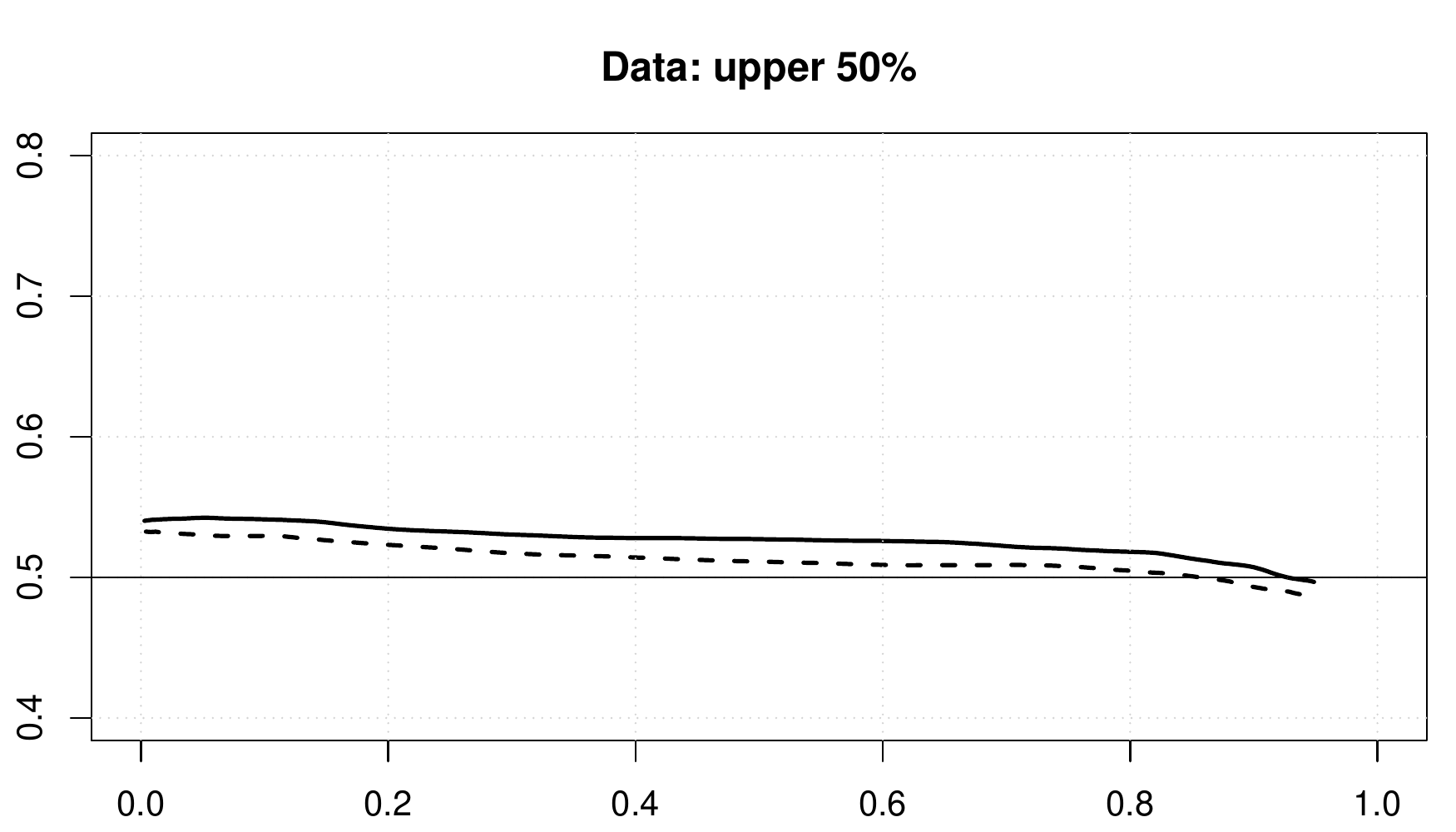}
	}
	\subfigure{
		\label{fig.1d}
		\includegraphics[scale=0.31]{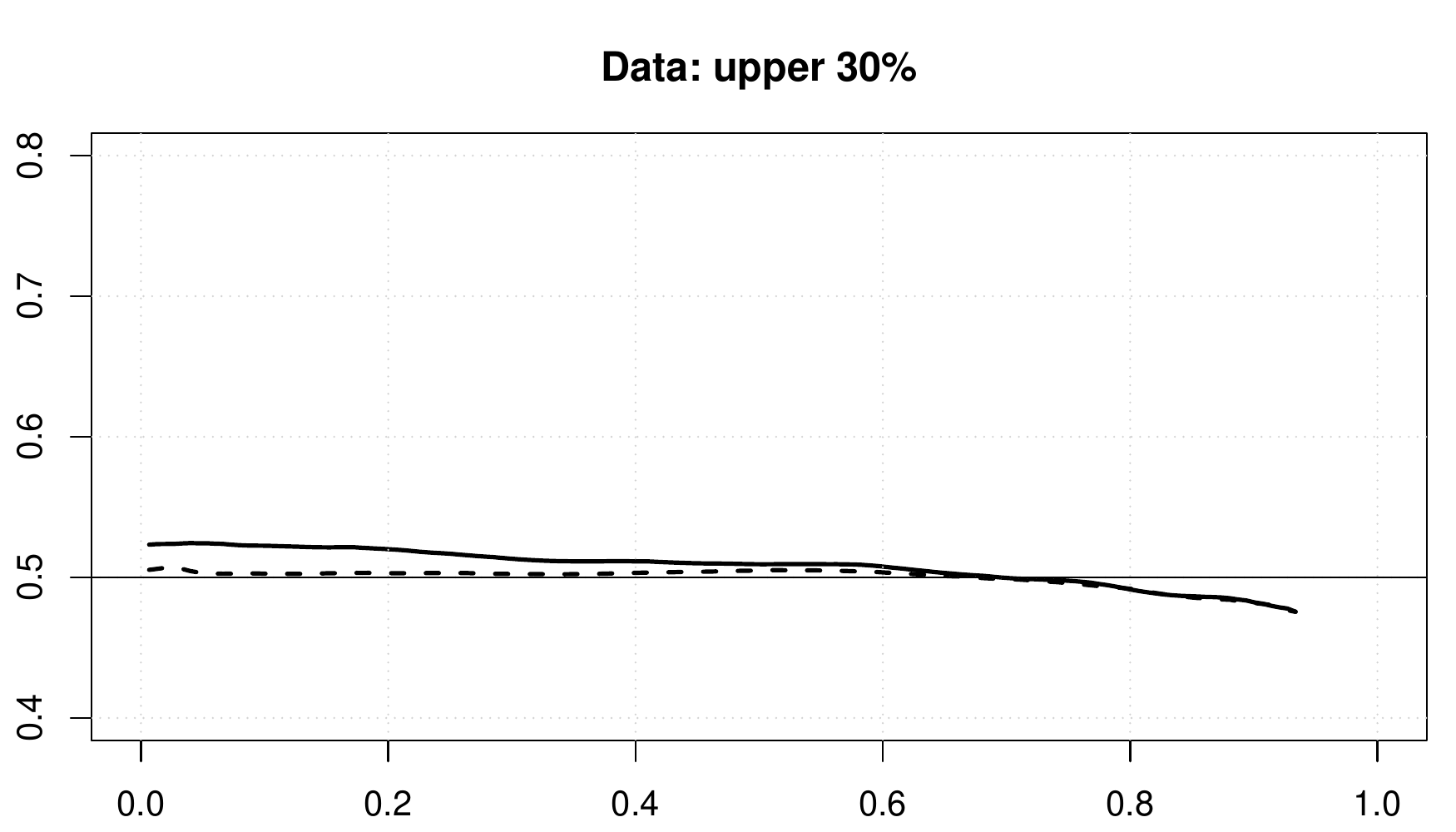}
	}
\caption{Plot of $\hat\lambda(p)$ and $p$ for $\Pareto(2,1)$ (solid line) and Fr\'{e}chet(2) (dashed line) at various levels of truncation. Sample size $n=500$. Horizontal line at $1/\alpha=0.5$} 
\label{fig:f1}
\end{figure}
}}

Let $X_{(1)}, \dots, X_{(n)}$ be the order statistics of the sample, $\mathbb{I}_{(A)}$ the indicator function of the event $A$. To estimate $\lambda(p)$, define the preliminary estimates
\begin{equation}
F_n(x)= \frac{1}{n} \sum_{i=1}^n \mathbb{I}_{(X_i \leq x)}  \qquad Q_n(x) = \frac{\sum_{i=1}^n X_i \mathbb{I}_{(X_i \leq x)}}{\sum_{i=1}^n X_{i}}
\end{equation}
Under the Glivenko-Cantelli theorem (see e.g.~\cite{resnick1999}) it holds that $F_n(x) \to F(x)$ almost surely and uniformly in $0<x<\infty$; under the assumption that $E(X) < \infty$, it holds that $Q_n(x) \to Q(x)$ almost surely and uniformly in $0<x<\infty$.  $F_n$ and $Q_n$ are both step functions with jumps at $X_{(1)}, \dots, X_{(n)}$. The jumps of $F_n$ are of size $1/n$ while the jumps of $Q_n$ are of size $X_{(i)}/T$ where $T=\sum_{i=1}^n X_{(i)}$. Define the empirical counterpart of $L$ as follows:
\begin{equation}\label{Lnp}
L_n(p) = Q_n( F_n^{-1}(p)) = \frac{\sum_{j=1}^i X_{(j)}}{T}, \quad \frac{i}{n} \leq p < \frac{i+1}{n}, \quad i=1, 2, \dots, n-1, 
\end{equation}
where $F_n^{-1}(p)= \inf \{x: F_n (x) \geq p \}$. To estimate $\alpha$ define

\begin{equation}
{\hat \lambda}_i = 1- \frac{\log(1-L_n(p_i))}{\log(1-p_i)}, \quad p_i = \frac{i}{n}, \quad i= 1, 2, \dots n-\left\lfloor \sqrt{n} \right\rfloor.
\end{equation}
and let $\hat \alpha = 1/ \bar{\lambda}$ where $\bar{\lambda}$ is the mean of the $\hat\lambda_i$'s. The choice of $i = 1, \dots , n - \left\lfloor \sqrt{n} \right\rfloor$ guarantees that  $\hat\lambda_i$ is consistent for $\lambda_i$ for each $p_i=i/n$ as $n \to \infty$.

%
%
%

\end{document}